\documentclass[11pt,letterpaper]{article}
\usepackage{color,ifpdf,latexsym,graphicx,url}
\urlstyle{same}
\usepackage[margin=1in]{geometry}
\usepackage{amsmath,amssymb,amsthm}
\usepackage{gensymb}
\usepackage{xcolor}
\usepackage{colortbl}
\usepackage{multirow}
\usepackage{enumitem}
\usepackage{subcaption}
\usepackage{overpic}

\usepackage{hyperref}
\hypersetup{breaklinks,bookmarks,bookmarksnumbered,bookmarksopen,bookmarksopenlevel=2}
{\makeatletter \hypersetup{pdftitle={\@title}}}

{\makeatletter
 \gdef\xxxmark{%
   \expandafter\ifx\csname @mpargs\endcsname\relax 
     \expandafter\ifx\csname @captype\endcsname\relax 
       \marginpar{xxx}
     \else
       xxx 
     \fi
   \else
     xxx 
   \fi}
 \gdef\xxx{\@ifnextchar[\xxx@lab\xxx@nolab}
 \long\gdef\xxx@lab[#1]#2{\textbf{[\xxxmark #2 ---{\sc #1}]}}
 \long\gdef\xxx@nolab#1{\textbf{[\xxxmark #1]}}
}

{\makeatletter \gdef\fps@figure{!htbp}}


\def\makecell#1{{\def\arraystretch{1}\begin{tabular}{@{}c@{}}#1\end{tabular}}}

\let\realbfseries=\bfseries
\def\bfseries{\realbfseries\boldmath}

\newtheorem{theorem}{Theorem}
\newtheorem{lemma}[theorem]{Lemma}
\newtheorem{corollary}[theorem]{Corollary}
\newtheorem{definition}[theorem]{Definition}
\newtheorem{question}{Open Problem}

\newcommand{\Celeste}{\textsc{Celeste}}
\newcommand{\ZeroCeleste}{\textsc{Zero-Player Celeste}}
\let\epsilon=\varepsilon
\def\defn#1{\textbf{\textit{\boldmath #1}}}

\title{Celeste is PSPACE-hard}
\author{%
  Lily Chung%
    \thanks{MIT Computer Science and Artificial Intelligence Laboratory,
    32 Vassar St., Cambridge, MA 02139, USA,
    \protect\url{{lkdc,edemaine}@mit.edu}}
\and
  Erik D. Demaine\footnotemark[1]
}
\date{}

\begin{document}
\maketitle

\begin{abstract}
We investigate the complexity of the platform video game Celeste.  We prove that navigating Celeste is PSPACE-hard in five different ways, corresponding to different subsets of the game mechanics.
In particular, we prove the game PSPACE-hard even without player input.
\end{abstract}

\section{Introduction}

\defn{Celeste}\footnote{\url{https://exok.com/games/celeste/}.  Celeste and its sprites are the properties of Maddy Makes Games.  Sprites are used here under Fair Use for the educational purpose of illustrating mathematical theorems.}
is a 2D platform video game released in 2018 by Maddy Makes Games.
It won the Best Independent Game and Games for Impact awards at The Game Awards 2018 \cite{GameAwards} and sold over a million copies \cite{EXOKProfile}.
In Celeste, the player controls a single character, Madeline, who must navigate various hazards along her journey.  We analyze the following natural decision problem about Celeste:

\begin{definition}[\Celeste]
  Given a Celeste level, is it possible for Madeline to traverse from a designated start location to a designated end location?
\end{definition}

A previous paper \cite{Ahmed2022Celeste} attempted to resolve this question by claiming that {\Celeste} is NP-complete, but failed to correctly prove containment in NP.
Specifically, their proof made the incorrect assumption that the sequence of inputs solving a {\Celeste} instance must be polynomially bounded in size.
Their NP-hardness reduction applies the framework from \cite{Nintendo_TCS} to show hardness with barriers, gates, and buttons.%
\footnote{Their clause gadget needs some modification to respect Celeste's mechanic that a button opens the Euclidean-nearest gate, but this is easy to do.}
They also showed that adding additional mechanics to Celeste (buttons that close gates instead of opening them) suffices for PSPACE-hardness, using the pressure-plate framework of \cite{Gaming_2014}.
By contrast, we show that Celeste's built-in mechanics, excluding gates and buttons, suffice for PSPACE-hardness.

We give five proofs that {\Celeste} is PSPACE-hard, each using different restricted combinations of existing game mechanics; see Table~\ref{tab:summary}.
Four of these proofs involve constructing a polynomial-time reduction to {\Celeste} from a motion-planning problem through a planar network of doors \cite{Doors_FUN2020}.
We make use of both ``open--close--traverse'' doors, as introduced in \cite{Gaming_2014, Lemmings_2015, Nintendo_TCS} and shown not to need crossovers in \cite{Doors_FUN2020}, and ``self-closing doors'', as introduced in \cite{Doors_FUN2020}.
In each case we construct a Celeste level corresponding to the motion-planning problem that can be traversed if and only if the motion-planning problem is solvable.
In all but one case we additionally show containment in PSPACE, establishing PSPACE-completeness.

\definecolor{hard}{rgb}{1,0.85,0.85}
\definecolor{open}{rgb}{1,1,0.85}
\definecolor{easy}{rgb}{0.85,0.85,1}
\definecolor{header}{rgb}{0.85,0.85,0.85}

\arrayrulewidth=0.75pt
\begin{table}[t]
  \centering
  \def\arraystretch{1.2}
  \def\FONT{\footnotesize}
  \def\ALLOW{\checkmark}
  \def\UNNEC{}
  \def\OPEN{\emph{OPEN}}
  \def\STACK#1#2{$\vcenter{\hbox{#1}\hbox{#2}}$}
  \def\IMAGE#1{$\vcenter{\hbox{\includegraphics{figures/entities/#1}}\vskip2pt}$}
  \tabcolsep=2.5pt
  \begin{tabular}{|cccccccccc||c|>{\footnotesize}c|}
    \hline
    \rowcolor{header}
    \multicolumn{1}{|c|}{\FONT \makecell{jump-\\through}} & \multicolumn{1}{c|}{\FONT \makecell{crumble\\blocks}} & \multicolumn{1}{c|}{\FONT spinners} & \multicolumn{1}{c|}{\FONT spring} & \multicolumn{1}{c|}{\FONT seeker} & \multicolumn{1}{c|}{\FONT jellyfish} & \multicolumn{1}{c|}{\FONT pufferfish} & \multicolumn{1}{c|}{\FONT barrier} & \multicolumn{1}{c|}{\FONT \makecell{move\\blocks}} & \multicolumn{1}{c||}{\FONT \makecell{Kevin\\blocks}} &&
    \\
    \rowcolor{header}
    \multicolumn{1}{|c|}{\IMAGE{jumpthrough.png}} & \multicolumn{1}{c|}{\IMAGE{crumble.png}} & \multicolumn{1}{c|}{\IMAGE{spinner.png}} & \multicolumn{1}{c|}{\IMAGE{spring.png}} & \multicolumn{1}{c|}{\IMAGE{seeker.png}} & \multicolumn{1}{c|}{\IMAGE{jelly.png}} & \multicolumn{1}{c|}{\IMAGE{puffer.png}} & \multicolumn{1}{c|}{\IMAGE{barrier.png}} & \multicolumn{1}{c|}{\IMAGE{move.png}} & \multicolumn{1}{c||}{\IMAGE{kevin.png}} & \multicolumn{1}{c|}{\raisebox{2.5ex}{Complexity}} & \multicolumn{1}{c|}{\normalsize\raisebox{2.5ex}{Sec}}
    \\
    \hline
    \hline
    \rowcolor{hard}
    \UNNEC & \UNNEC & \ALLOW && \ALLOW &&& \ALLOW & \ALLOW &&
    PSPACE-hard & \S\ref{sec:seeker-barrier-move}
    \\
    \rowcolor{hard}
    \UNNEC & \UNNEC & \ALLOW &&& \ALLOW && \ALLOW &&&
    PSPACE-hard & \S\ref{sec:jellyfish-barrier}
    \\
    \rowcolor{hard}
    \UNNEC & \UNNEC & \ALLOW &&&& \ALLOW &&&&
    PSPACE-hard & \S\ref{sec:pufferfish}
    \\
    \rowcolor{hard}
    \UNNEC & \UNNEC & \ALLOW &&&&&&& \ALLOW &
    PSPACE-hard & \S\ref{sec:kevin}
    \\
    \rowcolor{hard}
    &&& \ALLOW && \ALLOW &&& \ALLOW &&
    PSPACE-hard & \S\ref{sec:zeroplayer}
    \\
    \hline
    \rowcolor{easy}
    \ALLOW & \ALLOW & \ALLOW & \ALLOW & \ALLOW & \ALLOW & \ALLOW & \ALLOW & \ALLOW &&
    $\in$ PSPACE & \S\ref{sec:in-pspace}
    \\
    \hline
  \end{tabular}
  \caption{Summary of our results about the complexity of {\Celeste}.}
  \label{tab:summary}
\end{table}

We also consider the following problem, where we ignore the player and treat Celeste as an automaton:

\begin{definition}[\ZeroCeleste]
  Given a Celeste level and Madeline's starting position, will Madeline ever reach a designated end position if the player makes no inputs?
\end{definition}

One of our proofs (Section~\ref{sec:zeroplayer}) shows that {\ZeroCeleste}
with a certain combination of game mechanics
(springs, jellyfish, and move blocks) is PSPACE-complete
by reducing from a zero-player motion-planning problem \cite{trains2020}.
Adapting this result gives our fifth proof that {\Celeste} is PSPACE-hard.

We have tested our constructions in the Celeste game itself;
the custom map file and
some dynamic illustrations can be found on an accompanying website.%
\footnote{\url{https://github.com/cryslith/celeste-constructions}}

\section{Definitions}
\label{sec:defs}

We define an idealized version of Celeste which captures its relevant behavior.
A Celeste level\footnote{Chapters in Celeste consist of many levels, each of whose state does not persist across level boundaries.  There is no bound on the size of a level; our reductions will each produce a single large level rather than a collection of small ones.} of size \(M\) consists of a subset of \(\mathbb{Z}^2\) called the \defn{tilemap}, which defines where the solid walls are in the level.
It additionally contains a polynomially sized set of entities of various types, each placed at a specific initial location.  The locations of entities are not necessarily aligned with the integer tile grid, but we require the tilemap and entities to be confined to an \(M \times M\) rectangle.  In Celeste, many quantities such as positions, sizes, and velocities, are stored as either 32-bit integers or 32-bit floating point numbers.  We will ignore unusual behavior caused by overflow or loss of precision, and instead work with the assumption that Celeste physics are translation-invariant.

The player interacts with Celeste by controlling the main character, Madeline; see Figure~\ref{fig:Madeline}.  Madeline can move left and right on the ground or in the air, jump off of the ground or walls, and climb walls for a limited time.  She can also \defn{dash} in any of the eight cardinal and ordinal directions, which gives her a short-lasting boost of speed in that direction.  She can dash only once in the air, after which she has to land on the ground before dashing again.%
\footnote{Later in the game, Madeline gains the ability to dash twice before landing on the ground, but we ignore this additional ability as it does not affect our PSPACE containment results. Our hardness constructions could also be modified to be robust to a constant number of dashes.}

Madeline can also perform a number of more advanced movements, such as superdashes, hyperdashes, ultras, wallbounces, and crouch-dashes.  However, these do not allow her to traverse our gadgets in unexpected ways, and are not necessary for their intended operation.  For instance, superdashes, hyperdashes, and ultras can impart more horizontal velocity than usual when jumping from the ground, but breaking most of our gadgets would require Madeline to gain vertical height rather than horizontal distance.  Wallbounces can give additional vertical height but require a suitably-placed wall, which we avoid.  Crouch-dashes can sometimes be used to bypass obstacles by squeezing between them, but we place our obstacles close enough together to prevent this.  It is possible that an undiscovered bug or exploit in the game physics could break our gadgets, but in the following we will assume this is not the case.

\begin{figure}
  \centering
  \includegraphics[width=1cm]{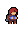}\qquad
  \includegraphics[width=4cm]{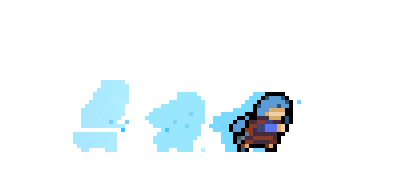}
  \caption{Madeline, at rest and while dashing}
  \label{fig:Madeline}
\end{figure}

Ordinarily, while falling Madeline can achieve a maximum ``glide ratio'' of at most 0.563.  That is, for every tile she falls vertically, she can drift only 0.563 tiles horizontally\footnote{Detailed information on Celeste's mechanics was obtained from many sources, including reverse-engineering the game code.}.  Dashing horizontally can move her up to 9.6 tiles horizontally, while dashing diagonally upwards can move her by at most 8 tiles horizontally and 2.1 tiles upward, after which her glide ratio decays to its baseline.  We exploit these limits to require Madeline to expend dashes and utilize entities in order to traverse sections of our constructions.

We make use of the following entities in our constructions.

\begin{itemize}[leftmargin=0.6in]
\def\ENTITYDEFAULT{1.5ex}
\let\ENTITYRAISE=\ENTITYDEFAULT
\def\ENTITY#1{\item[\smash{\raisebox{\ENTITYRAISE}{\raisebox{-\height}{\hbox to 0.6in{\hss\includegraphics{figures/entities/#1}\hss}}}}]}
\def\ENTITIES#1#2{\item[\smash{\raisebox{\ENTITYRAISE}{\raisebox{-\height}{\hbox to 0.6in{\hss\vbox{\hbox{\includegraphics{figures/entities/#1}}\vskip0.5ex\hbox{\includegraphics{figures/entities/#2}}}\hss}}}}]}
\def\ENTITYRAISE{1ex}\ENTITIES{jumpthrough}{crumble}\let\ENTITYRAISE=\ENTITYDEFAULT
  \defn{Jumpthroughs} can be passed through from below, but not from above.
  Madeline activates \defn{crumble blocks} by contacting them from above or from the sides (but not from the bottom),
  which makes them briefly disappear.
  These blocks act as one-way ``diodes'' which Madeline can traverse only in one direction (upwards and downwards respectively).
\ENTITY{spinner}
  \defn{Spinners} instantly kill Madeline upon her colliding with them,
  respawning her at her starting position and resetting the entire level to its initial state.
  This is never beneficial for the player.
  Spinners do not interact with entities other than Madeline.
\ENTITY{spring}
  \defn{Springs} in the pictured vertical orientation
  launch Madeline and jellyfish sideways and slightly upwards.
  Horizontal springs launch Madeline and jellyfish directly upwards while preserving some of their existing horizontal momentum.
\ENTITY{seeker}
  \defn{Seekers} kill Madeline on contact.  They can be pushed around by moving blocks, but otherwise remain still.
  If they ever come into line-of-sight of Madeline, they chase her and exhibit more complicated behavior;
  we will avoid defining this behavior by forcing Madeline
  to either avoid line-of-sight or contact the seeker in our constructions.
\ENTITY{jelly}
  \defn{Jellyfish} can be held by Madeline to float through the air when falling, increasing her horizontal speed and reducing her terminal velocity.
  This improves her maximum glide ratio to 4.5.
  Madeline can also throw jellyfish a short distance.
\ENTITY{puffer}
  Madeline can interact with \defn{pufferfish} by jumping on top of them, which restores her dash and moves the pufferfish downwards a short distance.
  If she instead approaches the pufferfish from below or to the side, the pufferfish will explode, restoring her dash and bouncing her away.
  The pufferfish will then respawn at its original position.
\def\ENTITYRAISE{2.5ex}\ENTITY{barrier}\let\ENTITYRAISE=\ENTITYDEFAULT
  \defn{Barriers} are intangible to Madeline, but act as solid walls for seekers.  They also permanently destroy any jellyfish that contact them.
\ENTITY{move}
  Madeline activates a \defn{move block} by touching it from above or from the sides,
  which causes it to begin moving in the direction displayed as an arrow on the block.
  The block will keep moving until obstructed by a solid wall, at which point it disappears and respawns at its original position.
  A move block with a spring attached will also be activated if Madeline or a jellyfish contacts the spring.
\ENTITY{kevin}
  \defn{Kevin blocks}\footnote{Named after Kevin Regamey, Celeste's sound designer.} have complex behavior.
  Each Kevin block maintains a stack of (position, direction) pairs, initially empty.
  Madeline activates the Kevin block by dashing into it from any side.
  Upon activation, the Kevin block pushes its current position onto its stack along with the direction Madeline dashed into it from,
  \emph{provided} that the direction currently on top of the stack (if any) is perpendicular to this new direction.
  The effect of this condition is that consecutive triples of positions on the stack always form right angles;
  they are never collinear.
  In any case, the Kevin block then begins charging quickly in the direction \emph{toward} the side that Madeline hit, until the Kevin block hits a wall.
  Whenever it is not charging, the Kevin block will retrace its path by slowly moving towards the position on top of its stack,
  removing that position from the stack when it arrives there.
\end{itemize}

\label{sec:in-pspace}
\begin{lemma}
  {\Celeste} with the listed entities other than Kevin blocks is contained in PSPACE.
\end{lemma}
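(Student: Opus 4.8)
The plan is to exhibit a nondeterministic polynomial-space algorithm that decides reachability and then invoke Savitch's theorem ($\mathrm{NPSPACE} = \mathrm{PSPACE}$). The core observation is that, once Kevin blocks are excluded, the entire instantaneous state of a Celeste level---everything needed to determine its subsequent evolution---can be encoded in $\mathrm{poly}(M)$ bits. This is exactly the point where the naive attempt to place the problem in NP fails: the solving input sequence need not be polynomially bounded, so we cannot guess and verify it as a certificate, but the reachable \emph{state space} is nonetheless only singly exponential.

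First I would fix a canonical encoding of a game state. Madeline contributes a constant amount of data: her position and velocity together with a bounded discrete status (dash available or not, remaining climb time, grounded versus airborne, and so on). Each entity likewise contributes only a constant amount of data on top of the tilemap, which is fixed as part of the input. A spinner, spring, and barrier are static; a crumble block carries a disappear/respawn timer; a seeker carries a (possibly pushed) position; a jellyfish carries a position and a bounded mode flag (held, falling, or thrown); a pufferfish carries an explode/respawn timer; and a move block carries a position, a velocity, and an activated/moving/respawning flag. Because all positions and velocities are confined to the $M \times M$ rectangle and are represented at a fixed, bounded sub-tile precision, each such quantity fits in $O(\log M)$ bits; summing over the polynomially many entities yields a total state of size $p(M) = \mathrm{poly}(M)$.

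Next I would argue that the game is a discrete-time dynamical system that is deterministic given the player's input: in each frame the player supplies one of a constant number of button combinations, and the transition to the next state is computable in time, hence space, polynomial in $p(M)$. This step requires checking that every interaction among the listed entities---a spring launching Madeline or a jellyfish, a move block pushing a seeker or being triggered by an attached spring, a pufferfish exploding and respawning, a jellyfish being thrown or destroyed by a barrier---updates only the constant-size records above and never introduces unbounded auxiliary state. Granting this, reachability becomes reachability in an implicit graph whose at most $2^{p(M)}$ vertices are game states and whose edges are the per-frame transitions. A nondeterministic machine guesses the player's inputs one frame at a time, maintaining only the current state together with a step counter bounded by $2^{p(M)}$ (which needs only $p(M)$ bits) to guarantee termination, and accepts upon reaching the designated end location. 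This uses $\mathrm{poly}(M)$ space, so Savitch's theorem places the problem in PSPACE.

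The main obstacle---and precisely the reason Kevin blocks are excluded---is the claim that each entity carries only constant-size memory. A Kevin block maintains a stack of (position, direction) pairs that grows by one with each perpendicular activation, and since the number of activations over a play is bounded only by the (possibly superpolynomial) length of the input sequence, the stack depth may be superpolynomial, with no evident polynomial bound. For all of the remaining entities the substantive work is to verify, case by case, that their memory is genuinely bounded, so that the polynomial state-size bound---the crux of the entire argument---holds.
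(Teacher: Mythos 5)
Your proposal is correct and follows essentially the same approach as the paper: encode the full game state (Madeline plus each non-Kevin entity) in polynomial space, nondeterministically guess the per-frame inputs while simulating, and conclude via $\mathrm{NPSPACE} = \mathrm{PSPACE}$. The only cosmetic difference is that you bound the run with an explicit step counter of $p(M)$ bits whereas the paper halts upon a repeated state; these are interchangeable, and your more detailed per-entity accounting simply fleshes out what the paper asserts in one sentence.
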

\begin{proof}
  The states of Madeline and of every entity other than Kevin blocks can be described by a polynomial amount of space, since entities are confined to a polynomial-sized area.
  Thus an algorithm which guesses Madeline's inputs on every frame and simulates Celeste,
  until Madeline either reaches the goal location or a state is repeated,
  requires only polynomial space to function,
  showing containment in NPSPACE = PSPACE \cite{Savitch-1970}.
\end{proof}

\section{Single-Player Hardness}

In order to show PSPACE-hardness, we simulate certain ``door gadgets''.
We make use of both ``open--close--traverse'' doors, as introduced in \cite{Gaming_2014, Lemmings_2015, Nintendo_TCS}, and ``self-closing doors'', as introduced in \cite{Doors_FUN2020}.
Refer to Figure~\ref{fig:doors}.
An \defn{open--close--traverse door} is a 2-state gadget with three tunnels labeled ``open'', ``close'', and ``traverse''.  Traversing the open and close tunnels respectively changes the gadget's state to open or closed; the traverse tunnel can be traversed only in the open state.
A \defn{self-closing door} is a 2-state gadget with two tunnels labeled ``open'' and ``self-close''; traversing the self-close tunnel is possible only in the open state and changes the gadget's state to closed, preventing it from being traversed again until the open tunnel is traversed and the state is reset to open.
A \defn{symmetric self-closing door} is similar, except that the open tunnel (more symmetrically called a ``self-open'' tunnel) cannot be traversed while the door is in the open state.
All tunnels are \defn{directed}: they can be traversed in only one direction.
Optionally, the open--close--traverse door or self-closing door
(but not the symmetric self-closing door) can be made
\defn{open-optional} meaning that the two end locations of the open tunnel
are identified, making the open tunnel into an open \defn{port},
so the agent can always freely choose whether to open the
door or just skip the traversal.

\begin{figure}
  \centering
  \subcaptionbox{Open--close--traverse door}{\includegraphics[scale=0.8]{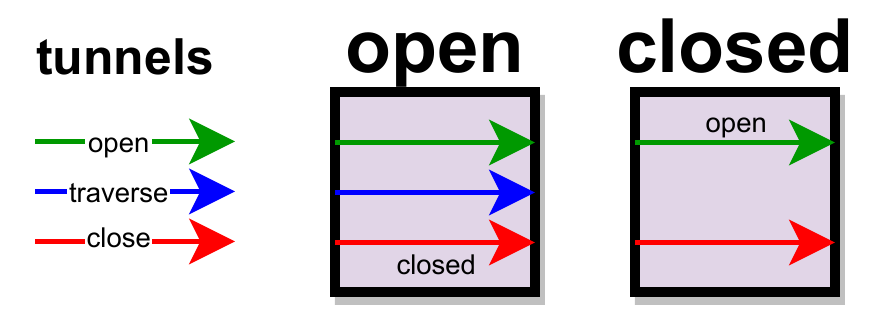}}\hfill
  \subcaptionbox{Self-closing door}{\includegraphics[scale=0.8]{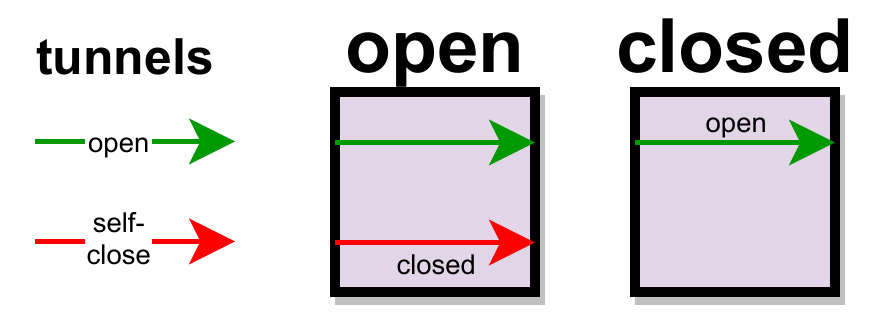}}

  \subcaptionbox{Symmetric self-closing door}{\includegraphics[scale=0.8]{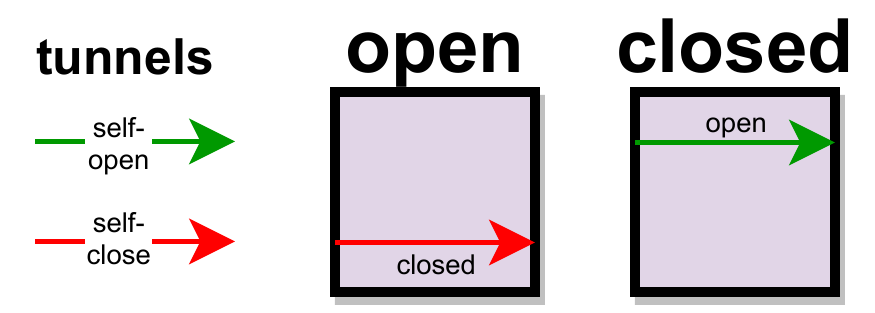}}
  \caption{The three types of door gadgets we use, from \cite{Doors_FUN2020}.
    Each diagram consists of a legend of labeled tunnels on the left,
    and the traversals possible in each of the two states of the gadget
    (``open'' and ``closed'') on the right.
    Each traversal that changes the state of the gadget
    is labeled with the state that it changes to.
  }
  \label{fig:doors}
\end{figure}

A key set of results from \cite{Doors_FUN2020} (Theorems 4.1 and 4.7) is that the ``planar motion planning problem'' is PSPACE-hard for any one of these types of doors\footnote{Except for a specific planar arrangement of the tunnels of an open--close--traverse door; we will avoid that arrangement.}.
More precisely, for any gadget, a \defn{planar network of gadgets}
consists of a finite number of instances of that gadget,
each with a specified initial state,
and an undirected graph connecting together the ends of tunnels
(called ``locations''), such that the gadgets and graph can be drawn
in the plane without crossings.
The \defn{planar (1-player) motion planning problem} asks, given a planar network
of gadgets, a start location, and a destination location, whether there is
a traversal sequence from start to destination.
Because it is easy to build ``hallways''
(paths that Madeline can traverse in any direction)
and ``branching hallways'' (connections where Madeline can freely choose to
follow any incident hallway), the graph part of a planar network is easy to
represent.
Therefore, constructing a gadget that simulates any one door is enough to show PSPACE-hardness of traversing Celeste levels.

The following four constructions all use jumpthroughs, crumble blocks, and
spinners, so we omit their mention in the section titles, and instead
just list the unique mechanics that each construction uses.
The jumpthroughs and crumble blocks are convenient shorthand for
one-way diodes, which can instead be replaced by the gadget in
Figure~\ref{fig:long-fall}, so they are not listed in the theorems or
Table~\ref{tab:summary}.

\begin{figure}
  \centering
  \includegraphics[scale=0.75]{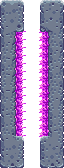}
  \caption{A one-way diode gadget, which can be traversed only from top to bottom.  Madeline cannot ascend from bottom to top even with a dash.}
  \label{fig:long-fall}
\end{figure}

\subsection{Seekers, Barriers, and Move Blocks}
\label{sec:seeker-barrier-move}

\begin{theorem}
  {\Celeste} with spinners, seekers, barriers, and move blocks is PSPACE-complete.
\end{theorem}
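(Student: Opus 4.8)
The plan is to prove the two directions separately. Containment in PSPACE is immediate: spinners, seekers, barriers, and move blocks are all among ``the listed entities other than Kevin blocks,'' so the preceding Lemma already gives that {\Celeste} with these mechanics is in PSPACE. All of the work is therefore in the hardness direction.

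For hardness I would reduce from the planar 1-player motion planning problem, which is PSPACE-hard for any one of the three door types by Theorems 4.1 and 4.7 of \cite{Doors_FUN2020}. As the surrounding text establishes, hallways and branching hallways are easy to realize for Madeline, and the jumpthrough/crumble diodes (equivalently the long-fall gadget of Figure~\ref{fig:long-fall}) make every tunnel directed. Hence it suffices to build a single door gadget; I would target the open--close--traverse door, taking care to use a planar tunnel arrangement other than the one excluded in the footnote. Given such a gadget, the reduction replaces each door of the network by a copy, wires the locations together with directed hallways, and places Madeline's start and the goal at the corresponding locations. This is clearly computable in polynomial time, and a traversal of the resulting {\Celeste} level exists if and only if one exists in the network.

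The heart of the construction is the door gadget, and here I would use a single seeker as a movable obstacle whose position along a one-dimensional track encodes the door's state. Barriers form the walls of this track: they are solid to the seeker but intangible to Madeline, so her three tunnels may pass through the region while the seeker is confined to sliding between a ``clear'' endpoint and a ``blocking'' endpoint. Two move blocks, aimed in opposite directions along the track, do the pushing---the block reached by the open tunnel sweeps the seeker to the clear endpoint, and the block reached by the close tunnel sweeps it to the blocking endpoint, each block then striking a solid wall, vanishing, and respawning so that the gadget is reusable. I would trigger these blocks by attaching springs and placing each spring in the corresponding open or close tunnel, so that merely traversing the tunnel drives the push (the move-block/spring rule makes Madeline's contact with the spring suffice). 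The traverse tunnel is routed through the blocking endpoint: in the open state the seeker sits at the clear endpoint and Madeline walks through, whereas in the closed state the seeker occupies the path and contact kills her, exactly realizing the ``traversable only when open'' semantics. Because each push sweeps the whole track, the open and close tunnels reset the state unconditionally, matching the door's definition.

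The step I expect to be the main obstacle is verifying that this geometry is \emph{tight}. I must place spinners and solid walls so that Madeline can neither bypass the blocking seeker with a dash, glide, or other advanced movement, nor gain line of sight to the seeker (which would trigger its undefined chasing behavior), while still leaving the open and close tunnels freely traversable in both states. Equally delicate is the timing of the pushes: I need to position each spring and move block so that the sweep re-blocking the traverse path occurs only after Madeline has cleared the relevant tunnel, so that closing the door can never retroactively kill her mid-traversal, and so that traversing the open or close tunnel never inadvertently sends the seeker across Madeline's own position. Confirming these local properties in every state, together with planarity of the three tunnels and the seeker track, is the bulk of the remaining argument; global correctness of the reduction then follows from the door result cited above.
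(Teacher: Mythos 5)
Your overall strategy is exactly the paper's: reduce from planar motion planning with open--close--traverse doors, build hallways and diodes as described in the surrounding text, and realize the door as a seeker confined by barriers whose position encodes the state, pushed between a blocking and a clear position by move blocks, with the traverse tunnel routed through the blocking position and spinners/walls preventing any bypass or line of sight. The containment direction via the earlier lemma is also the same.

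One concrete deviation matters, though. You propose to trigger the move blocks by attaching springs and placing the springs in the open and close tunnels. Springs are not among the mechanics in the theorem statement, so as written your construction proves PSPACE-hardness only for the larger set \{spinners, seekers, barriers, move blocks, springs\}, which does not imply the stated theorem (hardness with fewer mechanics is the stronger claim, and the paper's table tracks exactly which mechanics each proof needs). The paper instead has Madeline activate each move block by touching it directly from above or the side, and resolves the resulting obstruction by noting that she can follow close behind the activated block and pass through once it hits a solid wall, disappears, and respawns at its original position. Your proposal should be amended to use direct activation and to account for how Madeline gets past the block she just triggered. A second, minor difference: the paper confines the seeker to a ring of barriers with the two stable positions at opposite corners, and uses two move blocks per state change (one per leg of the ring) rather than a single one-dimensional track with one block per direction; this appears to be what makes the planar routing of the three tunnels and the avoidance of line of sight work out, so your 1D-track variant would need the geometric verification you already flag as the main remaining obstacle.
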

\begin{proof}
  We reduce from planar motion planning with open--close--traverse doors.
  Figure~\ref{fig:seeker} shows the simulation of the door.

\begin{figure}
  \centering
  \vspace*{1em}
  \begin{overpic}[scale=0.75]{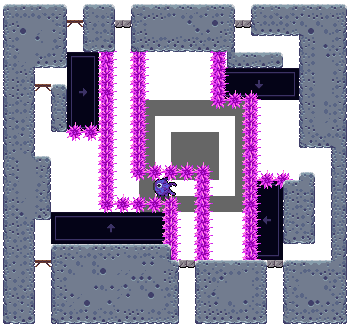}
    \put(18,94){\makebox(0,0){\strut open out}}
    \put(39,94){\makebox(0,0){\strut traverse in}}
    \put(69,94){\makebox(0,0){\strut close in}}
    \put(12,-2){\makebox(0,0){\strut open in}}
    \put(52,-2){\makebox(0,0){\strut traverse out}}
    \put(78,-2){\makebox(0,0){\strut close out}}
  \end{overpic}
  \vspace*{1em}
  \caption{An open--close--traverse door constructed with a seeker, barriers, and move blocks.  Currently in the ``closed'' state.}
  \label{fig:seeker}
\end{figure}

  The seeker is constrained by a ring of barriers.
  Whenever the seeker is in the top-right corner of the ring, Madeline can traverse the middle tunnel by falling down and dashing rightwards
  to avoid the spinners.
  However, when the seeker is in the bottom-left corner, she cannot traverse the middle tunnel without hitting it.
  The door is opened by traversing the left tunnel, which activates two move blocks, pushing the seeker to the top-right corner.
  (In particular, after activating the first move block, Madeline can follow
  close behind and get past it when the move block hits the solid wall,
  disappears, and respawns at its original location.)
  Similarly, traversing the right tunnel activates two other move blocks which return the seeker to the bottom-left corner.
  Jumpthroughs and crumble blocks ensure that the gadget does not reach an invalid state.
  For example, once Madeline has triggered one of the move blocks, she cannot
  exit the gadget (e.g., via the entrance she just used) except via the intended exit.
  Importantly, it is not possible for Madeline to be in line-of-sight of the seeker without dying;
  this is necessary to avoid triggering the seeker's complex chasing behavior.
\end{proof}

\subsection{Jellyfish and Barriers}
\label{sec:jellyfish-barrier}

\begin{theorem}
  {\Celeste} with spinners, jellyfish, and barriers is PSPACE-complete.
\end{theorem}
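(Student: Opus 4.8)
Since jellyfish and barriers are among the entities covered by the containment result of Section~\ref{sec:in-pspace}, {\Celeste} with spinners, jellyfish, and barriers lies in PSPACE, so it remains to prove PSPACE-hardness. The plan is to reduce from planar motion planning with self-closing doors (Figure~\ref{fig:doors}); as in Section~\ref{sec:seeker-barrier-move}, hallways and branching hallways are easy to build, so it suffices to simulate a single door. I would encode the door's state in the position of a single jellyfish confined to the gadget: \emph{staged} on a high ledge (``open'') versus \emph{parked} on a low ledge on the far side (``closed''). The leverage is the jellyfish's effect on the glide ratio: a falling Madeline drifts at most $0.563$ tiles per tile of descent unaided, but up to $4.5$ tiles per tile while holding a jellyfish. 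By choosing the traverse gap sufficiently wide and deep, the constant horizontal bonus of a single dash (at most $9.6$ tiles) is negligible against the $4.5$ factor, so the gap is crossable only while carrying the jellyfish.

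Concretely, the self-close tunnel \emph{is} this wide gap: Madeline enters at the staging ledge, grabs the jellyfish, glides across, and is deposited through a one-way diode (Figure~\ref{fig:long-fall}) onto the far low ledge, where the jellyfish comes to rest in its parked position. The staging ledge is now empty, the gap can no longer be crossed, and the door reads closed. The open tunnel is a separate, oppositely directed route along which Madeline retrieves the parked jellyfish and carries it back up a climbable staircase of ledges to the staging position, resetting the state to open. Spinners flanking both tunnels, together with one-way diodes, enforce that each tunnel is traversable in only one direction and cannot be skipped or reversed partway; note that the state jellyfish is never destroyed but merely shuttled, which is what makes the door reusable.

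The barriers play the essential role of keeping this token honest. Since barriers are intangible to Madeline but destroy any jellyfish on contact, I would line every hallway of the network with barriers, so that no jellyfish can be carried between gadgets; this confines each gadget's jellyfish to that gadget and prevents Madeline from importing a spare to force a crossing. Additional barriers inside the gadget seal off the unintended routes along which Madeline might otherwise move the jellyfish. The main obstacle is the correctness (anti-cheating) argument: I must verify that in every reachable configuration Madeline can leave only through the intended exits, with the state updated exactly as the self-closing door of \cite{Doors_FUN2020} prescribes. In particular I must show that she cannot cross the wide gap without the jellyfish (using the width/drop bound above together with the paper's standing assumption that advanced movements gain no unexpected height), cannot retain or duplicate the jellyfish so as to re-cross, and that the reset route genuinely forces re-staging the jellyfish rather than being completed empty-handed in a way that falsely marks the door open. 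Establishing that the reachable configurations are precisely the two door states, with the correct transitions, is the crux of the proof.
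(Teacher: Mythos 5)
Your containment argument, your choice to reduce from self-closing doors, and your core mechanism---encoding the door's state in the position of a single jellyfish confined by barriers, with the self-close tunnel crossable only via the improved glide ratio ($4.5$ versus $0.563$)---all match the paper's construction. The gap is in your reset (open) route. You have Madeline retrieve the parked jellyfish and \emph{carry} it back up to the staging ledge at the top of the self-close tunnel. This forces the open route and the self-close route to share a Madeline-accessible location: once she stands on the staging ledge holding the jellyfish (having entered via the open tunnel), nothing you describe stops her from immediately gliding across the gap, i.e., leaking from the open tunnel's entrance to the self-close tunnel's exit; symmetrically, a player entering the self-close tunnel when the door is closed could wander out the open tunnel's exit. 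These leaks are exactly the traversals a self-closing door must forbid, and you correctly identify this as ``the crux'' but do not supply the idea that resolves it.

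The paper's resolution is a mechanism absent from your proposal: the jellyfish is \emph{thrown} between the two halves of the gadget across fields of spinners. Spinners kill Madeline but do not interact with other entities, so a thrown jellyfish passes through unharmed while Madeline can never follow. The open port lives in a left chamber fully separated from the self-close tunnel by spinners; after a self-close traversal Madeline throws the jellyfish through the bottom spinners into that chamber, and to reopen the door she enters the chamber, carries the jellyfish up \emph{within it}, and throws it back across the top spinners onto the staging ledge. Madeline thus never reaches the staging ledge from the open side, which is what kills the leak. Without the throwing-through-spinners separation (or an equivalent one-way isolation of the staging ledge), your gadget as described does not simulate a self-closing door, so the reduction is not yet established. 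A secondary, fixable mismatch: because your open route can be completed empty-handed, you are really simulating an open-optional door, which is the variant the paper reduces from anyway.
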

\begin{proof}
  We reduce from planar motion planning with open-optional self-closing doors.
  Figure~\ref{fig:jelly} shows the simulation of the door.

\begin{figure}
  \centering
  \vspace*{1em}
  \begin{overpic}[scale=0.75]{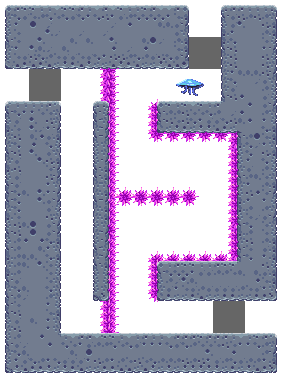}
    \put(0.5,77){\makebox(0,0)[r]{\strut open in/out}}
    \put(54,100.5){\makebox(0,0){\strut self-close in}}
    \put(74.5,15.5){\makebox(0,0)[l]{\strut self-close out}}
  \end{overpic}
  \caption{An open-optional self-closing door using a jellyfish and barriers, in the ``open'' state.}
  \label{fig:jelly}
\end{figure}

  The right-side self-closing tunnel can be traversed from top to bottom only if Madeline has a jellyfish,
  which she can use to drift around the spinners.
  However, she cannot traverse it without a jellyfish because her ordinary glide ratio is not enough to get around the corners;
  the best she can do is use her single dash to get around one of them.

  After Madeline traverses the right-side tunnel, the jellyfish is stuck at the bottom with her.
  The only way to return it to the top of the gadget is to first throw it through the spinners into the left chamber.
  The door is reopened by entering the left chamber, retrieving the jellyfish from the bottom,
  and throwing it back across the spinners at the top.
  The jellyfish can never exit the gadget because of the barriers blocking the entrances.
  (Recall that barriers permanently destroy jellyfish, so they also cannot be
  used to reset the jellyfish's position and re-open the gadget.)
\end{proof}

\subsection{Pufferfish}
\label{sec:pufferfish}

\begin{theorem}
  {\Celeste} with spinners and pufferfish is PSPACE-complete.
\end{theorem}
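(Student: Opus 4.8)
The plan is to reduce from planar motion planning with self-closing doors, reusing the template of the previous two constructions and the PSPACE-containment lemma. The key observation is that a pufferfish is a natural two-state device: bouncing on it from above refreshes Madeline's dash and nudges the pufferfish downward, while contacting it from below or the side makes it explode, refreshing her dash, bouncing her away, and respawning it at its original (raised) height. I would encode the door's state in the pufferfish's vertical position, arranging a lower shelf so that once the pufferfish has been bounced down it rests there and cannot drift back up on its own; the only way to restore it to the top is to trigger an explosion.

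Concretely, I would build the self-close tunnel so that, in the open state, Madeline enters from above, bounces off the raised pufferfish to refresh her dash, and spends that dash to clear a wall of spinners on the way out; the bounce simultaneously pushes the pufferfish onto the lower shelf, putting the gadget into the closed state. The geometry would be tuned so that, with the pufferfish lowered, Madeline can no longer reach it at the height required to chain the dashes across the spinners, so the self-close tunnel becomes impassable --- exactly the self-closing behavior. The open tunnel would instead force her to approach the lowered pufferfish from the side or below, triggering an explosion whose bounce carries her across a complementary spinner wall to the exit while resetting the pufferfish to the top. If this explosion route only threads the spinners when the pufferfish starts on the lower shelf, the construction realizes a symmetric self-closing door; otherwise it realizes an ordinary (or open-optional) self-closing door, either of which suffices by the cited PSPACE-hardness results.

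As before, jumpthroughs and crumble blocks would cap each tunnel as a one-way diode, so that Madeline cannot retreat through an entrance or otherwise leave the gadget in an inconsistent state, and the spinner walls would guarantee that every legal traversal must follow the intended dash-refresh sequence --- her baseline glide ratio together with a single unrefreshed dash is far too short to cross a spinner wall unaided. I expect the main obstacle to be the precise physics: since the pufferfish's downward displacement, the bounce heights, and the dash lengths are all fixed constants of the game, I would have to check numerically (and confirm in-game, as the authors do throughout) that the lowered pufferfish genuinely blocks the self-close traversal, that the explosion bounce reliably lands Madeline past the spinners, and that no combination of drifting with her one dash lets her shortcut a spinner wall. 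Containment in PSPACE is immediate from the earlier lemma, since pufferfish are among the listed non-Kevin entities, so the reduction gives PSPACE-completeness.
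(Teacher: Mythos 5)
Your proposal matches the paper's construction essentially exactly: the paper also encodes the door state in the pufferfish's vertical position, closes the door by having Madeline bounce on it from above (refreshing the dash she needs to clear the spinners on the way out, while pushing the pufferfish down into the central area), and reopens it by having her dash up into the explosion radius from below, yielding a symmetric self-closing door. The only detail you omit is the initialization issue --- since the pufferfish always starts at its spawn point, the gadget as built always begins with the top tunnel open, and the paper handles doors needing the opposite initial state by reflecting the gadget about the $y$ axis.
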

\begin{proof}
  We reduce from planar motion planning with symmetric self-closing doors.
  Figure~\ref{fig:puffer} shows the simulation of the door.

\begin{figure}
  \centering
  \vspace*{1em}
  \begin{overpic}[scale=0.75]{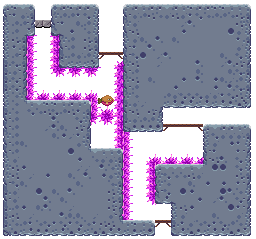}
    \put(12,94){\makebox(0,0){\strut self-close in}}
    \put(47,94){\makebox(0,0){\strut self-close out}}
    \put(99.5,47.5){\makebox(0,0)[l]{\strut self-open out}}
    \put(63,-2.5){\makebox(0,0){\strut self-open in}}
  \end{overpic}
  \vspace*{1em}
  \caption{A symmetric self-closing door using a pufferfish.  Initially the top tunnel is open and the bottom tunnel closed.}
  \label{fig:puffer}
\end{figure}

  Initially the bottom tunnel is untraversable, and Madeline can traverse the top tunnel only by
  dropping in, dashing right, jumping on the pufferfish, and dashing upwards to exit.
  This moves the pufferfish downwards through the spinners into the central area.

  Now the top tunnel cannot be traversed (because Madeline has only a single dash),
  but Madeline can traverse the bottom tunnel by dashing upwards into the pufferfish's explosion radius.  The pufferfish's explosion launches her to the right wall, which she can climb to the exit.
  The pufferfish then respawns in its original position, resetting the gadget.

  A minor issue with this construction is that because the pufferfish always begins the level at its spawn point,
  the gadget cannot be initialized with the bottom tunnel open and the top tunnel closed.
  This is solved by reflecting the gadget about the $y$ axis to obtain a door gadget that initially has the opposite tunnel open.
\end{proof}

\subsection{Kevin Blocks}
\label{sec:kevin}

In order to show that {\Celeste} is PSPACE-hard with Kevin blocks, we will construct an open-optional self-closing door gadget.
One difficulty is that our construction always begins in the closed state,
but we will show that this nonetheless suffices for PSPACE-hardness.

\begin{lemma} \label{lem:initially closed}
  Planar motion planning with initially closed open-optional self-closing doors
  is PSPACE-complete.
\end{lemma}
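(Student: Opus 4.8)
The plan is to prove hardness by a self-reduction that normalizes initial states. I start from planar motion planning with open-optional self-closing doors in which each door may be initialized in \emph{either} state, which is PSPACE-complete by \cite{Doors_FUN2020}, and I transform a given network $N$ into an equivalent network $N'$ in which every door is initially closed. The first thing to record is a monotonicity observation: because a closed door only \emph{forbids} its self-close traversal and never enables anything, replacing an initially-open door by an initially-closed one can only shrink the set of reachable configurations. Hence simply closing every door is unsafe, since it could turn a yes-instance into a no-instance; the real work is to \emph{pre-open}, at the very start of play, exactly those doors that were initially open in $N$, while guaranteeing that this initialization cannot later be abused to open doors the agent should not be able to open.

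To do this I would introduce a fresh start location $s'$ together with an \emph{initialization path} that leads from $s'$ through the open port of every door that is initially open in $N$ (thereby opening each such door exactly as ordinary play would), and finally into the original start location $s$. The circularity that makes this delicate, namely that one seems to need an already-open door in order to bootstrap the very first traversal, is broken by anchoring everything at $s'$: since $s'$ is the unique global start, the agent is guaranteed to reach it first, so every gadget used to gate the initialization path can itself be an initially-closed self-closing door whose open port is placed earlier along the path (ultimately at $s'$) and is therefore reachable before it is needed. In particular, a single initially-closed ``turnstile'' door, opened from a port at $s'$ and traversed on the way from the last open port into $s$, separates the initialization path from the main network.

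The correctness argument then splits into two directions. For the easy direction, honest play walks the initialization path, reproduces the initial configuration of $N$, and behaves identically thereafter, so every solution of $N$ lifts to $N'$. For the converse I would make each segment of the initialization path \emph{one-way} and self-sealing, again using only initially-closed self-closing doors, so that once the agent crosses into the main network it can neither return to the path nor use the path as a shortcut between the scattered open ports; combined with the monotonicity observation, this forces every main-network configuration reachable in $N'$ to be reachable in $N$, yielding equivalence. The main obstacle I anticipate is planarity together with non-exploitability: the open ports of the initially-open doors are scattered across the planar embedding of $N$, and routing a single sealed, one-way initialization path through all of them without introducing crossings, and without the added edges creating any traversal absent from $N$, is the crux. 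I expect to handle this by routing the initialization path as a thin ``moat'' following an Euler tour of a spanning tree of the planar network, which passes near every vertex and hence near every open port, exploiting the fact \cite{Doors_FUN2020} that self-closing doors require no crossover gadget.
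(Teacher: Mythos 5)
Your overall strategy matches the paper's: a new start location, an initialization path that opens every initially-open door before handing control to the original start, a diode to prevent revisiting the path, and the monotonicity observation that opening doors can only help. But there is a genuine gap in how you attach the initialization path to the doors. You route the path ``through the open port of every door that is initially open in $N$.'' That open port is a \emph{location of the network $N$}, already connected to other locations by the connection graph. Wiring the initialization path into it gives the agent, at time zero, access to that location and to everything reachable from it in $N$ --- even though in $N$ that location might be unreachable from the original start $s$. Your one-way, self-sealing segments do not help here: the leak is not the agent re-entering the path later, but the agent stepping \emph{off} the path into the network at an open-port location it was never supposed to reach, which can turn a no-instance into a yes-instance. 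The paper's fix is the idea you are missing: it first builds, out of open-optional self-closing doors, a door with \emph{two} opening ports between which no leakage is possible, replaces each initially-open door by this two-port version, and routes the initialization path only through the \emph{extra} ports, which are fresh locations disjoint from the original network.

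Your planarity fix also does not work as stated. The boundary of a small neighborhood of a spanning tree of a planar graph avoids the tree edges but must cross every non-tree edge (twice), so the Euler-tour ``moat'' does not give a crossing-free routing of the initialization path; and the cited result is that crossovers for self-closing doors can be \emph{simulated}, not that they are unnecessary. The paper accepts that the augmented network is non-planar and then planarizes it using the directed-crossover simulation from Theorem 4.1 of \cite{Doors_FUN2020}, checking the one property that matters for this lemma: that crossover simulation uses only initially closed doors, so the invariant you are trying to establish is preserved. With the two-port gadget and this crossover substitution, the rest of your argument (the forward direction by honest initialization, the converse by monotonicity) goes through as in the paper.
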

\begin{proof}
  We reduce from motion planning with open-optional self-closing doors
  that may begin in either state.

  First, we use a combination of open-optional self-closing doors to create an open-optional self-closing door with two opening ports,\footnote{The same construction is used for a different purpose in Theorem 3.3 of \cite{Doors_FUN2020}.} as shown in Figure~\ref{fig:initiallyclosed}.
  This construction needs one-way diodes, which are easy to implement with a
  directed open-optional self-closing door by identifying the open port
  with the entrance to the self-close tunnel.
  We replace every initially open door in the given instance with one of these doors, giving each one an extra opening port.
  Finally, we build a new traversal path for the agent that
  starts from a new start location,
  visits the extra opening port of each initially open door in sequence,
  and then proceeds to the original start location of the given instance.
  We place a one-way diode at the end of the path, so the agent can visit all
  the extra ports at the beginning but never again.
  Opening doors to gadgets can only help later traversal, so we can assume
  that the agent visits all the extra ports, and thus opens all doors that
  were supposed to be initially open.

\begin{figure}
  \centering
  \begin{overpic}[scale=0.75]{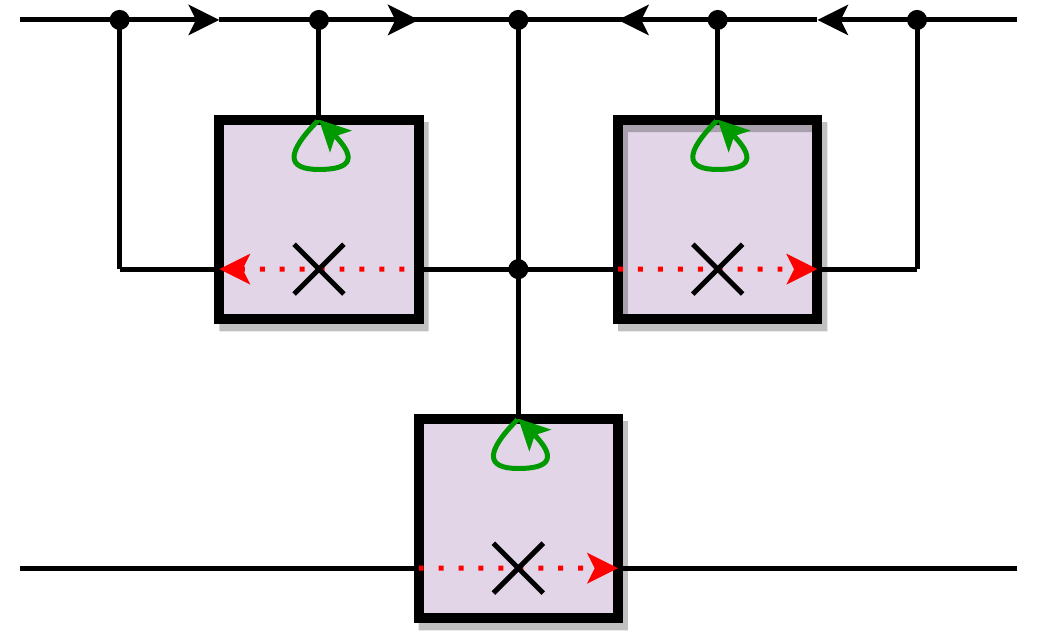}
    \put(0.5,59){\makebox(0,0)[r]{\strut open 1 in/out}}
    \put(99.5,59){\makebox(0,0)[l]{\strut open 2 in/out}}
    \put(0.5,6.5){\makebox(0,0)[r]{\strut self-close in}}
    \put(99.5,6.5){\makebox(0,0)[l]{\strut self-close out}}
  \end{overpic}
  \caption{Duplicating the opening port of a directed
    open-optional self-closing door:
    both of the two upper ports open the bottom self-close tunnel,
    without it being possible to leak between the two ports.
    Each box denotes an open-optional self-closing door,
    where the green loop is the open port and the
    dotted arrow denotes the self-close tunnel in an initially closed state.
    The arrows at the top of the diagram (exterior to gadgets)
    are one-way diodes.
  }
  \label{fig:initiallyclosed}
\end{figure}

  This construction does not preserve planarity.
  Luckily, the simulation of a directed crossover in Theorem 4.1 of \cite{Doors_FUN2020} uses only initially closed doors.
  Thus we can make our construction planar by replacing any crossing wires with these simulated crossovers, while preserving that all doors are initially closed.
\end{proof}

Now we can show the following result about Kevin blocks.

\begin{theorem} \label{thm:kevin}
  {\Celeste} with spinners and Kevin blocks is PSPACE-hard.
\end{theorem}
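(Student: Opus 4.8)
The plan is to reduce from planar motion planning with initially closed open-optional self-closing doors, which is PSPACE-complete by Lemma~\ref{lem:initially closed}. Since hallways and branching hallways are easy to realize in Celeste (as already noted), it suffices to build a single gadget, using only spinners and one Kevin block, that faithfully simulates one such door. The fact that a Kevin block always starts with an empty stack at its home position dovetails with the requirement that the simulated door start closed, which is precisely why Lemma~\ref{lem:initially closed} is the right problem to reduce from rather than the unrestricted door problem: the natural construction can only be initialized in one state, and the lemma removes the need for the other.

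Concretely, I would lay out the gadget with three connection points: an open port whose two ends are identified (as required for open-optional), and the directed entrance and exit of the self-close tunnel. The intended behavior is that, in the closed state, the Kevin block sits so that it together with a dense field of spinners leaves no spinner-free path across the self-close tunnel; traversing the open port dashes the block into the ``open'' configuration, exposing a spinner-free gap that a single dash can thread; and traversing the self-close tunnel forces Madeline to re-dash the block, sending it charging back toward its home position and thereby restoring the closed state. I would exploit Madeline's one-dash-per-airtime limit together with tight spinner placement to force exactly these traversals, orienting every corridor so that the advanced movements surveyed in Section~\ref{sec:defs} (hyperdashes, wallbounces, crouch-dashes, and the like) cannot supply the extra height or the squeeze needed to bypass the spinners.

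The hard part will be making the gadget a genuinely persistent two-state device despite the Kevin block's automatic, slow retrace toward home: if left alone, a charged block unwinds its stack and returns to its starting position, so the open configuration cannot simply be ``the block parked somewhere else,'' and a purely timing-based open window would be fatal because PSPACE-hardness forces solution paths that are too long for any such window to survive. The crux of the construction is therefore to harness the charge-toward-the-hit-side rule and the perpendicular-stack discipline---which forces successive recorded positions to meet at right angles and never be collinear---so that the block's rest configuration and the surrounding walls jointly encode open versus closed, and so that the \emph{only} way to advance the block's state is to traverse the intended tunnel in the intended direction.

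Finally, I would verify faithfulness and the \emph{no-leak} property: that neither the open port nor the self-close tunnel can be partially traversed to reach an unintended location or to effect an unintended state change, that Madeline can never be carried by the charging block into a region she could not otherwise reach, and that the retrace dynamics cannot be abused to traverse the self-close tunnel without having first opened the door. Assembling polynomially many copies of this gadget along the planar network produced by Lemma~\ref{lem:initially closed}, within a single large level, then yields a Celeste instance traversable from the designated start to the designated end if and only if the motion-planning instance is solvable, establishing that {\Celeste} with spinners and Kevin blocks is PSPACE-hard.
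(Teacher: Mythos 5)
You correctly identify the right problem to reduce from (Lemma~\ref{lem:initially closed}) and the right reason for choosing it: the Kevin block's empty initial stack forces the simulated door to start closed. But your gadget design hinges on the door's open state being encoded by a \emph{persistent} rest configuration of the Kevin block, and you explicitly reject a timing-based open window as ``fatal.'' This is where the proposal breaks down. By the stated mechanics, whenever a Kevin block is not charging it retraces its stack back toward its home position, so the only stable rest configuration is the initial one; there is no second parked position for the walls to ``jointly encode'' an open state, and the perpendicular-stack rule does not create one. The only persistent information the block carries is its stack, and the only observable consequence of that stack is \emph{how long} the block takes to get home. You have named the central difficulty but not resolved it.

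The paper's proof embraces exactly the timing-based window you dismiss, and the missing idea is that the window need not have bounded length: because the stack is unbounded, Madeline can ``wind up'' the block (e.g., dashing it clockwise around a rectangle) arbitrarily many times while in the open chamber, making the unwinding time --- and hence the open window --- as long as desired. The self-close tunnel is traversable only while the block is still unwinding, when Madeline can ride on it, and the constrained geometry (positions are pushed only at right-angle turns) ensures at most one such traversal per winding session, which is precisely the self-closing semantics. Your objection that exponentially long solution paths outlast any fixed window is answered by an explicit scheduling argument: list the door-openings in reverse chronological order and assign each a winding duration long enough to last until its corresponding self-close traversal, a quantity depending only on durations already assigned. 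These constraints are always satisfiable, so the simulation is faithful in both directions. Without this idea your construction has no working open state at all.
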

\begin{proof}
  We reduce from planar motion planning with initially closed open-optional self-closing doors, which is PSPACE-complete according to Lemma~\ref{lem:initially closed}.
  Figure~\ref{fig:kevin} shows the simulation of the door.

\begin{figure}
  \centering
  \vspace*{1em}
  \begin{overpic}[scale=0.75]{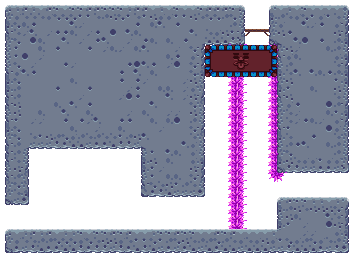}
    \put(0.5,10.5){\makebox(0,0)[r]{\strut open in/out}}
    \put(72,72.5){\makebox(0,0){\strut self-close out}}
    \put(99,19.5){\makebox(0,0)[l]{\strut self-close in}}
  \end{overpic}
  \caption{An open-optional self-closing door using a Kevin block.  Initially in the ``closed'' state.}
  \label{fig:kevin}
\end{figure}

  Initially the right tunnel is untraversable, since the Kevin block is placed too high up for Madeline to dash into it.
  By entering from the left, Madeline can climb the wall and
  move the Kevin block into the left chamber.
  By dashing into it repeatedly from different directions, she can ``wind up'' the Kevin block, adding arbitrarily many positions to its internal stack (e.g., by moving it clockwise around a rectangle).
  Once this is done, the block will take arbitrarily long to unwind, depending on how long it was wound up for.
  The right tunnel can be traversed only when the Kevin block finishes unwinding and returns to its original position,
  when Madeline can ride on top of it.
  The constrained space prevents the Kevin block from possibly moving through the right tunnel more than once per winding session,
  because positions can be added to its stack only when it makes right-angle turns.
  Therefore, every traversal of the right tunnel corresponds to an immediately previous visit to the left chamber,
  which is the same condition as a self-closing door.

  It follows that any traversal of a network of Kevin-based doors corresponds to a traversal of the corresponding network of self-closing doors.
  The only obstacle to showing the converse is in solving certain timing constraints:
  Whenever Madeline needs to go through an open door,
  the corresponding Kevin block must still be unwinding; that is, it must not have been wound for too short a time.
  Fortunately, these constraints are always solvable by the following method.
  List the doors in reverse order by the time at which the open tunnel is visited (each door will in general appear multiple times on the list).
  In this order, assign an amount of time to wind each door sufficient to keep the door wound until its self-close tunnel must be traversed.
  This quantity of time depends only on how much time was assigned for winding previous doors in the list.
  Because the constraints are always solvable, this shows that any traversal of the initially closed self-closing door network can be transformed into a traversal of the network of Kevin-based doors.
\end{proof}

\section{Zero-Player Hardness}
\label{sec:zeroplayer}

In this section we prove that predicting the outcome of a Celeste level is PSPACE-complete, ignoring player inputs.
We do so by reducing from a zero-player motion-planning problem,
in which an agent traverses a network of gadgets entirely deterministically.
Specifically, in \defn{zero-player motion planning} \cite{trains2020},
the gadgets must be \defn{input/output}, meaning that their locations can be
partitioned into entrances (inputs) and exits (outputs) ---
no location is an entrance for a transition in some state and
an exit for another transition in some state ---
and the connection graph connecting gadget locations must be
\defn{branchless}, meaning that it has at most one input location
in each connected component.
Thus the motion of the agent is fully determined from its starting location
and the initial gadget states; the goal is to determine whether the agent
ever reaches a given destination location.

The input/output gadget we consider is the
\defn{set-up switch/set-down switch}; refer to Figure~\ref{fig:switches}.
This 2-state gadget has two input locations,
labeled ``set-up'' and ``set-down'',
and four output locations,
labelled ``(up, up)'', ``(up, down)'', ``(down, up)'', and ``(down, down)''.
If an agent enters at input location set-$i$,
then they are forced to exit at output location $(i,s)$
where $s$ is the state of the gadget before traversal,
and then the gadget's state is set to~$i$.

\begin{figure}
  \centering
  \includegraphics[scale=0.8]{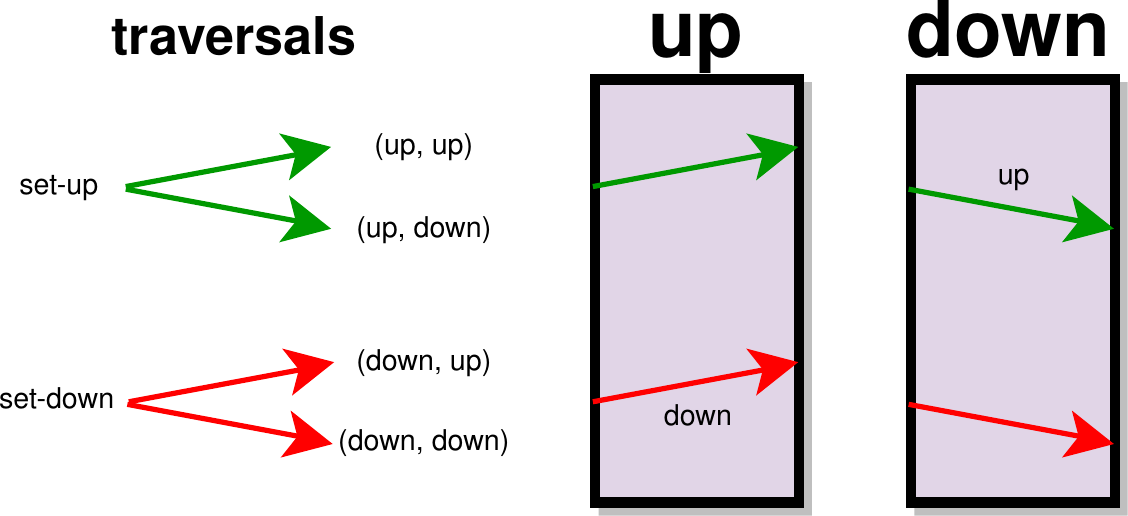}
  \caption{The set-up switch/set-down switch gadget.  Left: The six locations (two input and two output) along with arrows for all possible traversals between them.  Right: The traversals possible from each of the two states, with labels for traversals which modify the resulting state.}
  \label{fig:switches}
\end{figure}

Ani et al.~\cite{trains2020} gave a partial classification of
zero-player gadgets and the resulting complexity of the zero-player
motion planning problem.
An important result (Corollary 2.8) is that motion planning is PSPACE-complete
for any unbounded output-disjoint deterministic 2-state input/output gadget
with multiple nontrivial inputs.
Examining the set-up switch/set-down switch, we see that it is a 2-state
input/output gadget,
it is unbounded (can change states arbitrarily many times),
output-disjoint (no output can be reached from multiple inputs),
and deterministic (each input leads to a unique transition/output
in any given state).
Neither of the inputs is a trivial tunnel
(avoiding interacting with the state at all),
so the corollary shows that zero-player motion planning
with this gadget is PSPACE-complete.

We reduce this zero-player motion-planning problem to {\ZeroCeleste}
by simulating a set-up switch/set-down switch.
Unlike our other simulations, the motion-planning ``agent'' in this case
is not Madeline, but instead a jellyfish, so we do not need to worry about
player input interfering with the construction.
Also because of this, we do not need spinners, jumpthroughs, or crumble blocks
in this construction.

\begin{theorem}
  {\ZeroCeleste} with jellyfish, springs, and move blocks is PSPACE-complete.
\end{theorem}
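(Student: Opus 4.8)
The plan is to reduce from zero-player motion planning with the set-up switch/set-down switch, which the preceding discussion shows is PSPACE-complete, by building a {\ZeroCeleste} level whose jellyfish plays the role of the motion-planning agent. Two ingredients are needed: \emph{wires} that route the jellyfish deterministically between gadgets, and a simulation of the switch itself. A pleasant feature of the zero-player setting is that the agent is a jellyfish rather than Madeline, so there is no adversarial player input to fence off; determinism of the jellyfish's own physics does the work that spinners and one-way diodes do in the single-player constructions. Since the connection graph of a zero-player network is branchless (each connected component has at most one input location), the agent is never presented with a fork, so a wire can be realized as a chain of springs that catch the drifting, falling jellyfish and relaunch it, carrying it from one gadget's output to the next gadget's input along a prescribed path. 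First I would verify that a free jellyfish travels predictably enough between springs that such chains route it with no ambiguity, and that the available spring orientations (sideways-and-slightly-up, or straight up preserving horizontal momentum) suffice to turn corners and gain height as the planar layout demands.

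The heart of the construction is the switch gadget, which must store one persistent bit (the state up/down) between traversals. Here I would exploit a structural observation: because a move block always eventually strikes a wall and respawns at its origin, its only stable configuration is ``at origin,'' so move blocks cannot by themselves hold a bit; and springs never move on their own. The only object in our palette that rests indefinitely in either of two places is a jellyfish sitting on a platform. I would therefore store the gadget's state as the resting location of a \emph{captive} jellyfish occupying one of two cells, ``up'' or ``down.'' To implement the transition for input set-$i$, the arriving agent jellyfish contacts a spring attached to a move block, which simultaneously relaunches the agent and fires the block (using the stated mechanic that a spring-equipped move block is activated by a jellyfish touching its spring). The block's motion is arranged to interact with the captive jellyfish so as to both route the agent out the output $(i,s)$ corresponding to the \emph{old} state $s$ and relocate the captive jellyfish into cell $i$, writing the new state $s := i$. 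Spring-on-move-block is the key primitive, since one contact can then both redirect the agent and set further machinery in motion in a controlled order.

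Containment in PSPACE is the easy direction. A {\ZeroCeleste} level using only jellyfish, springs, and move blocks evolves deterministically, and its entire configuration (jellyfish positions and velocities together with the states of the move blocks, all confined to the $M \times M$ area with bounded-precision values) is describable in polynomial space; simulating frame by frame until the agent reaches the goal or a configuration repeats therefore uses only polynomial space. This is essentially the argument of the earlier containment lemma for Celeste without Kevin blocks, specialized to the deterministic zero-player setting. Combined with the reduction, it yields PSPACE-completeness.

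I expect the main obstacle to be timing and non-interference rather than any single gadget. Because Celeste advances in stepped physics, the geometry must force the events of one traversal to occur in exactly the intended order: the agent must be routed according to the old state \emph{before} that state is overwritten; the firing move block must travel, strike its wall, and cleanly respawn without stranding anything; and the captive jellyfish must land precisely in its target cell and come fully to rest before any later traversal reads it. I must also certify that the gadget is genuinely input/output, with its entrance and exit locations disjoint, that the agent and captive jellyfish never collide or launch one another except as designed, and that no stray spring bounce delivers a jellyfish into the wrong tunnel. Discharging these spatial and temporal constraints — ideally by laying out and testing the gadget in the game itself — is where the real work lies.
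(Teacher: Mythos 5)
Your overall strategy is the same as the paper's: reduce from zero-player motion planning with the set-up switch/set-down switch, route the jellyfish-agent between gadgets via chains of springs (the branchless connection graph removing any need for choice), store the switch's state as the resting position of a jellyfish inside the gadget, and trigger transitions with a spring mounted on a move block. The containment argument is also the paper's. However, two pieces are missing or underspecified.

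First, the step where "the block's motion is arranged to \emph{both} route the agent out the output $(i,s)$ corresponding to the old state $s$ and relocate the captive jellyfish into cell $i$" is exactly the hard part, and you leave it as a black box. The difficulty is that the arriving agent must exit through a tunnel determined by the \emph{old} state, i.e., by where the captive jellyfish currently sits --- but nothing in your description explains how the agent "reads" that position. The paper's resolution is an identity swap: the move block pushes the \emph{old captive} jellyfish out of the gadget, and since that jellyfish is already sitting in the state-dependent cell, it naturally exits through the output $(i,s)$; meanwhile the \emph{arriving} jellyfish bounces into the central chamber and becomes the new captive, writing state $i$. Which jellyfish "is" the agent changes at every gadget. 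Without this (or some substitute mechanism), your gadget does not obviously realize the state-dependent output routing.

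Second, {\ZeroCeleste} asks whether \emph{Madeline} reaches a designated end position with no player input, but your reduction's agent is a jellyfish and you never connect the two. You need a final gadget at the destination location that, without any input, moves Madeline to her end position if and only if a jellyfish arrives there (the paper supplies one). As written, your construction decides whether a jellyfish reaches a location, which is not the stated problem.
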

\begin{proof}
  We reduce from zero-player motion planning with the set-up switch/set-down
  switch.  Figure~\ref{fig:switchsim} shows the simulation of this gadget.

  \begin{figure}
    \centering
    \vspace*{1em}
    \begin{overpic}[scale=0.75]{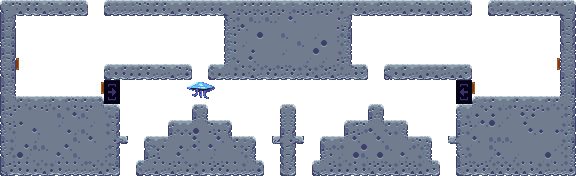}
      \put(17,32){\makebox(0,0){\strut set-up}}
      \put(83,32){\makebox(0,0){\strut set-down}}
      \put(21,-2){\makebox(0,0){\strut (down, up)}}
      \put(44,-2){\makebox(0,0){\strut (up, up)}}
      \put(59,-2){\makebox(0,0){\strut (down, down)}}
      \put(78,-2){\makebox(0,0){\strut (up, down)}}
    \end{overpic}
    \vspace*{1em}
    \caption{A set-up switch/set-down switch constructed with a jellyfish, springs, and move blocks.  Currently in the ``up'' state.}
    \label{fig:switchsim}
  \end{figure}

  The state of the gadget is determined by which side of the central chamber a jellyfish rests in.  A new jellyfish entering either of the input locations hits the spring on the side of the corresponding move block, activating it.  The move block pushes the original jellyfish out of the gadget through the corresponding output location.  Meanwhile the new jellyfish bounces off the spring on the outer wall and is propelled towards the middle of the gadget, where it falls through the hole into one of the central locations, setting the state of the gadget.

  It remains to show how to combine these gadgets together in a network.
  Figure~\ref{fig:hallways} shows how to route a jellyfish along
  predetermined paths to implement ``hallway'' connections between gadgets,
  including how to merge the output locations of multiple gadgets
  into a single input location of another gadget.
  In our construction, only one jellyfish at a time is outside of a gadget,
  and that jellyfish represents the agent
  (though \emph{which} jellyfish represents the agent varies over time).

  \begin{figure}
    \centering
    \vspace*{1em}
    \begin{subfigure}{0.9\textwidth}
      \centering
      \begin{overpic}[scale=0.8]{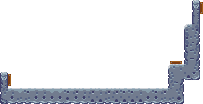}
        \put(0, 0){\includegraphics[scale=0.8]{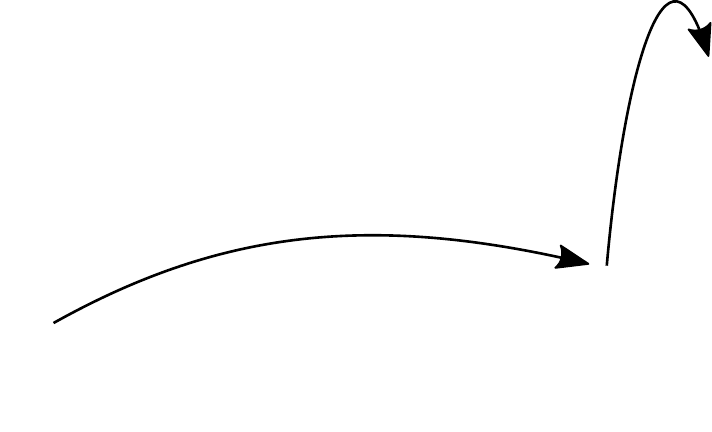}}
      \end{overpic}
      \caption{Modular hallway piece.  If the third spring is omitted, then
        jellyfish will fall straight down after traversing the hallway piece,
        allowing hallways to turn downwards or connect to gadgets.}
    \end{subfigure}

    \vspace*{2em}
    \subcaptionbox{A hallway that turns around, made from four hallway pieces}{
      \begin{overpic}[scale=0.8]{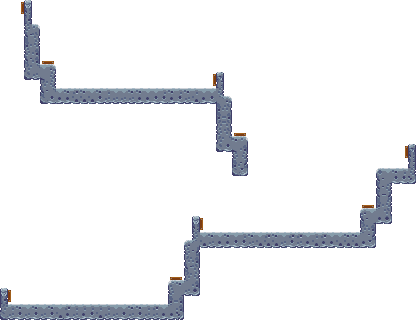}
        \put(0, 0){\includegraphics[scale=0.8]{figures/hallway_annotation}}
        \put(46, 17){\includegraphics[scale=0.8]{figures/hallway_annotation}}
        \put(51, 34){\scalebox{-1}[1]{\includegraphics[scale=0.8]{figures/hallway_annotation}}}
        \put(5, 52){\scalebox{-1}[1]{\includegraphics[scale=0.8]{figures/hallway_annotation}}}
      \end{overpic}
    }

    \vspace*{2em}
    \subcaptionbox{Hallways can merge together or cross over each other.}{
      \begin{overpic}[scale=0.8]{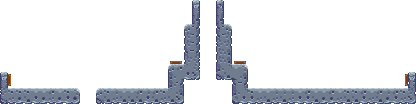}
        \put(0, 0){\includegraphics[scale=0.8]{figures/hallway_annotation}}
        \put(0, 0){\includegraphics[scale=0.8]{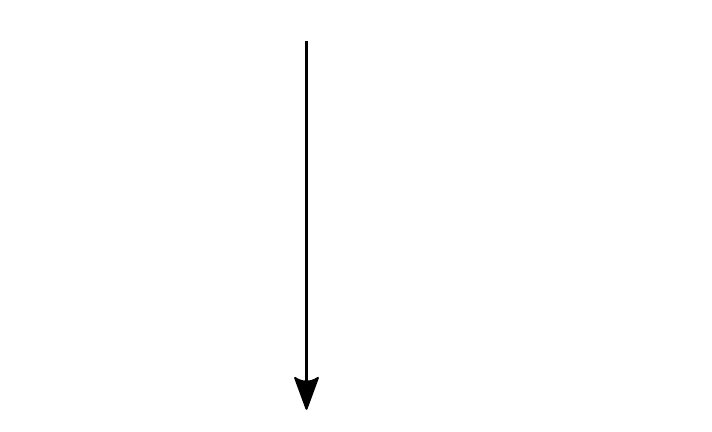}}
        \put(50.5, 0){\scalebox{-1}[1]{\includegraphics[scale=0.8]{figures/hallway_annotation}}}
      \end{overpic}
    }
    \caption{The various types of hallway connections needed to route jellyfish between gadgets.}
    \label{fig:hallways}
  \end{figure}

  We also need a final gadget at the destination location, so that
  Madeline reaches her destination without player inputs if and only if
  a jellyfish ends up at this gadget.
  Figure~\ref{fig:zeroplayerend} gives one such gadget.
\end{proof}

  \begin{figure}
    \centering
    \includegraphics[scale=1.0]{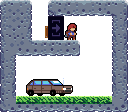}
    \caption{The final gadget.  Without player inputs, Madeline reaches her destination (by her car) if and only if a jellyfish enters this gadget.}
    \label{fig:zeroplayerend}
  \end{figure}

\begin{corollary}
  {\Celeste} with jellyfish, springs, and move blocks is PSPACE-complete.
\end{corollary}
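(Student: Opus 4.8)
The plan is to obtain both containment and hardness by leveraging the preceding zero-player theorem. Containment in PSPACE is immediate: jellyfish, springs, and move blocks are all among the listed entities other than Kevin blocks, so the PSPACE-containment lemma (Section~\ref{sec:in-pspace}) applies verbatim, the solver simply guessing Madeline's per-frame inputs rather than assuming she makes none. Thus essentially all of the work is in the hardness direction.

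For hardness I would reduce directly from the same zero-player motion-planning instance used in the preceding theorem, reusing its gadget simulation, hallway connections, and final gadget unchanged. The essential point is that in that construction the ``agent'' is a jellyfish, and the entire jellyfish network evolves deterministically and independently of Madeline, so long as Madeline can never physically touch any jellyfish, spring, or move block of the network. I would therefore place Madeline in a sealed chamber, separated from the network by solid tilemap walls and positioned at the input of the final gadget (Figure~\ref{fig:zeroplayerend}). Because she never acquires a jellyfish, her movement stays limited to the baseline glide ratio of $0.563$, a single mid-air dash, and brief wall-climbing; by sizing the chamber against these limits we guarantee she cannot escape to tamper with the network, nor shortcut to the destination by any other route.

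With this confinement in place, both directions follow. If the zero-player instance is solvable, then the jellyfish network in the Celeste level evolves exactly as in the zero-player construction (since Madeline cannot perturb it), a jellyfish reaches the final gadget, and the gated route to the destination opens; Madeline can then reach her destination by making the appropriate inputs (in the worst case, none). Conversely, if the instance is unsolvable, then no jellyfish ever reaches the final gadget, the gating route never opens, and no sequence of inputs lets the confined Madeline reach the destination. Hence the Celeste level is solvable if and only if the zero-player motion-planning instance is, giving PSPACE-hardness and, together with containment, PSPACE-completeness.

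The step I expect to be the main obstacle is verifying the confinement rigorously: I must check that Madeline's full movement repertoire --- ordinary drifting, her single dash, wall jumps and climbs, and the advanced maneuvers discussed in Section~\ref{sec:defs} --- can neither breach the chamber walls nor interact with any network entity, and in particular that she can never reach a jellyfish, which would raise her glide ratio to $4.5$ and might let her escape or interfere. This is the same robustness-against-player-input analysis that underlies our single-player constructions, here specialized to a single sealed region, so I would argue it by the same appeal to Celeste's established movement limits.
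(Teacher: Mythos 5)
Your overall strategy matches the paper's: keep the zero-player network intact, isolate Madeline from it, and release her exactly when a jellyfish arrives at the destination gadget. The containment half is also handled the same way, via the lemma of Section~\ref{sec:in-pspace}. However, there is a concrete gap at the single load-bearing point of the argument: you never specify \emph{how} the arrival of a jellyfish opens a route that Madeline cannot open herself using arbitrary inputs. A chamber ``separated from the network by solid tilemap walls'' can never open, since tiles do not move; so the seal must be an entity, and you must exhibit one that (a) is inert under every input Madeline can supply from inside the chamber, yet (b) is activated by the incoming jellyfish. The paper's entire proof is the specification of exactly this mechanism: Madeline is trapped \emph{beneath} a move block (Figure~\ref{fig:zerotoone}), which she cannot activate because move blocks are not triggered from below, but which the arriving jellyfish can activate (via an attached spring). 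Without naming such a mechanism, the claim that ``the gated route to the destination opens'' is unsupported.

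Relatedly, reusing the original final gadget of Figure~\ref{fig:zeroplayerend} verbatim is not sufficient. Its stated guarantee is only that Madeline reaches her destination \emph{without player inputs} if and only if a jellyfish arrives; it says nothing about what an actively controlled Madeline can do, which is precisely why the corollary requires an \emph{alternate} final gadget. Your closing paragraph correctly identifies confinement as the main obstacle, but the resolution is not a quantitative analysis of glide ratios and chamber dimensions --- it is the qualitative observation that move blocks cannot be triggered from below. Once that is added, the rest of your argument (determinism of the jellyfish network, Madeline's inability to perturb it, and the two directions of the equivalence) goes through as you describe.
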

\begin{proof}
  Using an alternate final gadget, we can instead trap Madeline beneath a move block until a jellyfish enters the gadget, as shown in Figure~\ref{fig:zerotoone}.
  (Recall that move blocks cannot be triggered from below.)
  \begin{figure}
    \centering
    \includegraphics[scale=1.0]{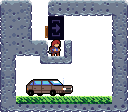}
    \caption{An alternate final gadget.  Madeline is able to reach her destination (with player inputs) if and only if a jellyfish enters this gadget.}
    \label{fig:zerotoone}
  \end{figure}
\end{proof}

\section{Conclusion}

We have shown that five variants of Celeste are PSPACE-hard to navigate
by reducing from motion-planning problems,
as summarized in Table~\ref{tab:summary}.
We conclude with a discussion of open questions.

\begin{question}
  What other subsets of Celeste mechanics suffice for PSPACE-hardness?
\end{question}

We can consider strict subsets of the mechanics already shown hard, or
consider other incomparable subsets, to fill in Table~\ref{tab:summary}.
Alternatively, we could consider other mechanics than those analyzed here;
see \cite{fandom-wiki,ink-wiki} for longer lists than Section~\ref{sec:defs}.

Of particular note is that Kevin blocks maintain a potentially unbounded stack of locations, which could store more than a polynomial amount of information.  However, it is difficult to make use of this information because they are always unwinding towards their initial states when not interacted with.
\begin{question}
  Is {\Celeste} with Kevin blocks contained in PSPACE?
\end{question}

In our reduction to {\Celeste} with Kevin blocks (Section~\ref{sec:kevin}), we constructed a self-closing door gadget which starts out closed.  We showed that motion planning over a planar network of initially closed self-closing doors is PSPACE-complete.

\begin{question}
  How does restricting various gadgets to particular initial states affect the complexity of the corresponding motion-planning problems?
\end{question}

\section*{Acknowledgments}

This work was initiated during extended problem solving sessions with the participants of the MIT class on
Algorithmic Lower Bounds: Fun with Hardness Proofs (6.892)
taught by Erik Demaine in Spring 2019.
We thank the other participants for their insights and contributions.
In particular, we thank Dylan Hendrickson and Josh Brunner for helping simplify the proof of Theorem~\ref{thm:kevin},
and we thank Michael Coulombe and Jayson Lynch for reading and suggesting revisions to drafts of the paper.

Images of sprites and gadgets were composed and tested using the fan-made level editor Ahorn\footnote{\url{https://github.com/CelestialCartographers/Ahorn}}.  We thank Cruor, Vexatos, and Ahorn's other contributors for creating this excellent tool.  We additionally thank the Celeste speedrunning, modding, and Tool-Assisted Speedrunning community for extensively researching Celeste's mechanics.

Finally, we thank Maddy Thorson, Noel Berry, and the rest of the development team for creating Celeste, a difficult and wonderful experience in many ways.

\bibliography{sources}
\bibliographystyle{alpha}

\end{document}